\documentclass[10pt,letterpaper,journal]{IEEEtran}

\usepackage{amsmath,amssymb,amsthm,mathtools,bm}
\usepackage{newtxtext,newtxmath}
\usepackage{microtype}
\usepackage{flushend}
\usepackage{cite}
\usepackage{url}
\usepackage{hyperref}
\usepackage[nameinlink,capitalize]{cleveref}

\newcommand{\R}{\mathbb{R}}
\newcommand{\T}{\mathbb{T}}

\newcommand{\Z}{\mathbb{Z}}

\newcommand{\cH}{\mathcal{H}}
\newcommand{\cE}{\mathcal{E}}

\newcommand{\dd}{\,\mathrm{d}}
\newcommand{\ii}{\mathrm{i}}

\newcommand{\avgT}[1]{\left\langle #1\right\rangle_{\T}}
\newcommand{\avgG}[2]{\left\langle #1\right\rangle_{#2}}

\theoremstyle{plain}
\newtheorem{theorem}{Theorem}

\newtheorem{lemma}[theorem]{Lemma}
\newtheorem{corollary}[theorem]{Corollary}
\theoremstyle{definition}

\newcommand{\finalcolumnbalance}{}

\hypersetup{
  colorlinks=true,
  linkcolor=black,
  citecolor=black,
  urlcolor=black,
  pdfauthor={Jiayang Zou, Luyao Fan, Jiayang Gao, and Jia Wang},
  pdftitle={An Explicit Family of Log-Concave Counterexamples to the Gaussian Completely Monotone Conjecture},
  pdfsubject={Gaussian complete monotonicity under log-concavity},
  pdfkeywords={complete monotonicity, entropy, Fisher information, heat flow, log-concavity}
}

\title{An Explicit Family of Log-Concave Counterexamples to the Gaussian Completely Monotone Conjecture}

\author{Jiayang Zou\textsuperscript{1,2}, Luyao Fan\textsuperscript{2},
Jiayang Gao\textsuperscript{2}, and Jia Wang\textsuperscript{2}\\[0.5ex]
\small \textsuperscript{1}Stanford University, Stanford, CA, USA;
\texttt{jyangzou@stanford.edu}\\
\small \textsuperscript{2}Shanghai Jiao Tong University, Shanghai, China;
\texttt{\char123 qiudao, fanluyao, gjy0515, jiawang\char125 @sjtu.edu.cn}}
 
\begin{document}
\maketitle

% Keep display mathematics compact in the final two-column layout.
\setlength{\abovedisplayskip}{4pt plus 2pt minus 1pt}
\setlength{\belowdisplayskip}{4pt plus 2pt minus 1pt}
\setlength{\abovedisplayshortskip}{1pt plus 1pt}
\setlength{\belowdisplayshortskip}{3pt plus 2pt minus 1pt}

\begin{abstract}
We construct smooth, strictly log-concave counterexamples to the Gaussian
completely monotone conjecture in every dimension. In one dimension, they form
an explicit family $f_m$ whose signed $m$th entropy derivative at time zero is
negative for every sufficiently large $m$; the inequality persists for all
sufficiently small positive times. Tensorization with a broad Gaussian factor
gives the higher-dimensional examples. The argument is analytic and
self-contained. It reduces the sign to a two-frequency entropy calculation on
the circle and transfers the resulting asymptotic to the real line through an
exact heat-flow formula for Gaussian-windowed Fourier modes. The proof was
developed by GPT-5.6 Sol Pro under the authors' guidance.
\end{abstract}

% \begin{IEEEkeywords}
% Complete monotonicity, entropy, Gaussian convolution, heat flow,
% log-concavity.
% \end{IEEEkeywords}

\section{Introduction}
\label{sec:introduction}

Let $P_t=e^{t\partial_x^2/2}$ denote the heat semigroup on $\R$.  If $X$
has density $f$, then $P_tf$ is the density of $X+\sqrt{t}\,Z$, where $Z$
is a standard Gaussian independent of $X$.  We consider the Boltzmann
entropy profile
\begin{equation}
 \cH_f(t)=\int_{\R}(P_tf)(x)\log(P_tf)(x)\,\dd x.
 \label{eq:entropy-profile}
\end{equation}
The Gaussian completely monotone conjecture of Cheng and Geng
\cite{ChengGeng2015} asserts that
\begin{equation}
 (-1)^k\cH_f^{(k)}(t)\ge0,
 \qquad k\ge1,\quad t>0.
 \label{eq:gcmc}
\end{equation}
Our heat convention corresponds to adding Gaussian noise of variance $t$;
the alternative convention $e^{t\partial_x^2}$ only rescales time.

We begin with the construction.  For each integer $m\ge5$, set
\begin{equation}
 a_m=2^{-m/4},\qquad b_m=2^{-m},\qquad \kappa_m=8a_m,
 \label{eq:parameters}
\end{equation}
and consider the probability density
\begin{equation}
 f_m(x)=Z_m^{-1}e^{-\kappa_mx^2/2}
 \bigl(1+a_m\cos x+b_m\cos2x\bigr),
 \qquad x\in\R,
 \label{eq:candidate}
\end{equation}
where $Z_m$ is the normalizing constant.

\begin{theorem}
\label{thm:main}
For every $m\ge5$, the density $f_m$ in~\eqref{eq:candidate} is smooth,
strictly positive, Gaussian-decaying, and strictly log-concave. Moreover,
with
\begin{equation}
 \beta:=\frac{3}{2\sqrt2}>1,
 \label{eq:beta}
\end{equation}
one has
\begin{equation}
 (-1)^m\cH_{f_m}^{(m)}(0)
 =-\frac{11}{96}\beta^m+o(\beta^m)
 \qquad (m\to\infty).
 \label{eq:main-asymptotic}
\end{equation}
Consequently, for every sufficiently large $m$, there is
$\varepsilon_m>0$ such that
\begin{equation}
 (-1)^m\cH_{f_m}^{(m)}(t)<0,
 \qquad 0\le t<\varepsilon_m.
 \label{eq:positive-time-failure}
\end{equation}
Thus~\eqref{eq:gcmc} fails for smooth strictly log-concave densities on~$\R$.
\end{theorem}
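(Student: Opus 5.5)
The proof has three parts: the qualitative properties of $f_m$, the asymptotic \eqref{eq:main-asymptotic}, and the passage to small positive times.

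\emph{Qualitative properties.} Smoothness, positivity and Gaussian decay are immediate, since $1+a_m\cos x+b_m\cos 2x\ge 1-a_m-b_m>0$. For strict log-concavity, write $g=1+a_m\cos x+b_m\cos2x$. Then
\[
-(\log f_m)''=\kappa_m-\frac{g''g-(g')^2}{g^2}\ge \kappa_m-\frac{|g''|}{g}.
\]
We have $|g''|\le a_m+4b_m$ and $g\ge 1-a_m-b_m$. For $m\ge5$ we have $b_m=a_m^4$ and $a_m\le 2^{-5/4}$, so $(a_m+4b_m)/(1-a_m-b_m)<8a_m=\kappa_m$. This gives $-(\log f_m)''>0$ uniformly.

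\emph{Derivative computation.} We use the heat equation: $\cH^{(k)}(0)$ is a universal polynomial functional of $f$ and its derivatives. Each $t$-derivative brings in $\tfrac12\partial_x^2$, and the nonlinearity enters through $\log$. The plan is to compare $f_m$ with the periodic density $h(x)=1+a\cos x+b\cos2x$ on $\T$, normalized by $2\pi$. Under the heat flow the modes of $h$ evolve explicitly: $P_t h=1+ae^{-t/2}\cos x+be^{-2t}\cos2x$. The circle entropy $\avgT{P_th\log P_th}$ is then a power series in $(ae^{-t/2},be^{-2t})$. Expanding $\phi(u)=(1+u)\log(1+u)=u+\sum_{n\ge2}(-1)^n u^n/(n(n-1))$ and averaging over $\T$ yields monomials $a^ib^j e^{-(i/2+2j)t}$. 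The $m$th derivative at $0$ weights each such monomial by $(-(i/2+2j))^m$.

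With $a=2^{-m/4}$ and $b=2^{-m}$, a monomial contributes about $2^{-m(i/4+j)}(i/2+2j)^m$. We maximize $(i/2+2j)\,2^{-(i/4+j)}$ over the admissible pairs, meaning those with nonzero average. The pure $a^2$ term gives the weight $1\cdot 2^{-1/2}$ with a positive contribution. The resonant term $a^2b$, from $\cos^2x\cos2x$, has exponent $3$, so its weight is $3\cdot 2^{-3/2}=\beta$. Its sign comes from the $u^3$ coefficient $-1/6$ times the average $\avgT{3a^2b\cos^2x\cos2x}=3a^2b/4$, and this is the source of the negative leading term. I would check by listing candidates that every other admissible pair, such as $a^4$, $b^2$ and $a^2b^2$, has strictly smaller weight. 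Summing $\sum_n$ with the tails bounded geometrically then gives $-c\beta^m+o(\beta^m)$. One must still verify that the constant is $11/96$ once the Gaussian window is included. In particular the window $\kappa_m$ shifts frequencies, and its own contribution $\kappa_m^k$-type terms must be shown to be $o(\beta^m)$. These contributions are small since $\kappa_m\sim 2^{-m/4}$, but they are raised to high powers with combinatorial growth.

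\emph{Transfer to $\R$.} This is the main obstacle. The tool is the exact formula $P_t[e^{-\kappa x^2/2}e^{\ii kx}]=(1+\kappa t)^{-1/2}\exp\!\bigl(-\tfrac{\kappa x^2}{2(1+\kappa t)}+\tfrac{\ii kx}{1+\kappa t}-\tfrac{k^2t}{2(1+\kappa t)}\bigr)$. With it, $P_tf_m$ equals a moving Gaussian times a trigonometric polynomial with slowly chirped frequencies and damping factors $e^{-k^2t/(2(1+\kappa t))}$. Expanding $\cH$ as Gaussian-weighted averages of powers of the oscillatory part, I would show that the quantities $\avgG{\cdot}{\kappa}$ of products of modes equal the circle averages $\avgT{\cdot}$ up to errors $O(e^{-c/\kappa})$ for the non-resonant terms. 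The remaining terms are smooth, explicit functions of $t$ and $\kappa$, and their $m$th derivatives must be controlled by Cauchy estimates on a complex $t$-disc. That disc needs radius about $1/2$ to capture $\beta^m$ precisely, so I would use analyticity in a disc of radius slightly above $1/(\text{exponent})$ for each monomial, and track the dependence of $\kappa$ on $m$ carefully. This is where $\kappa_m=8a_m$ must be shown small enough.

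\emph{Positive times.} Finally, $\cH_{f_m}$ is real-analytic, or at least $C^\infty$, on $[0,\infty)$ by Gaussian decay. The negativity of $(-1)^m\cH^{(m)}_{f_m}(0)$ then persists on $[0,\varepsilon_m)$ by continuity, which gives \eqref{eq:positive-time-failure}.
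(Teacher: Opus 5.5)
There is a genuine error in your derivative computation: you misidentify the dominant level. With $a_m=2^{-m/4}$ and $b_m=2^{-m}$, a monomial $a^ib^je^{-Lt}$ with $L=i/2+2j$ contributes $L^m a_m^ib_m^j=(L\,2^{-L/2})^m$ to the signed $m$th derivative. This depends only on $L$, so every admissible monomial with $L=3$ ties with $a^2b$. There are exactly two such monomials: $a^2b$ and $a^6$. You correctly dismiss $a^4$, $b^2$ and $a^2b^2$ (levels $L=2,4,5$), but you omit $a^6$. It comes from the $u^6$ term of $(1+u)\log(1+u)$, with coefficient $\frac1{30}$ and $\avgT{\cos^6x}=\frac5{16}$, so it contributes $+\frac1{96}\beta^m$.

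Your claim that every other admissible pair has strictly smaller weight is therefore false. Your periodic leading term would be $-\frac18\beta^m$ instead of $(-\frac18+\frac1{96})\beta^m=-\frac{11}{96}\beta^m$. Attributing the difference to the Gaussian window does not work either. For the transfer to succeed at all, every window effect (Fourier filtering, the time change, the envelope and normalization terms) must be $o(\beta^m)$, so the constant $-\frac{11}{96}$ is entirely periodic. The sign conclusion does survive, because $-\frac18+\frac1{96}<0$, but you only see this once the $a^6$ term is included.

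The transfer step is also only sketched, and the mechanism you propose for the main terms fails. A Cauchy estimate on a disc of fixed radius $R$ costs $m!R^{-m}$. That cost is affordable only for superexponentially small quantities, such as the filtering errors $O(e^{-c/\kappa_m})$ with $1/\kappa_m=2^{m/4}/8$; this is exactly how the paper uses radius $1/2$. Applied with radius of order $1/L$ to $e^{-L\tau_m(t)}$, where $\tau_m(t)=t/(1+\kappa_mt)$, it loses a full factor $m!$.

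What you need instead is a relative bound for the time change. The paper extracts coefficients exactly and obtains $0\le T_m(L,\kappa)/L^m-1\le e^{m^2\kappa/L}-1$. This is $o(1)$ uniformly in $L\ge1$, because $m^2\kappa_m\to0$. Combined with an $O(\beta^m)$ bound on the \emph{absolute} sum of the periodic series, it transfers $S_m$ up to $o(\beta^m)$. A Cauchy estimate with radius of order $m/L$ would also work, with only a polynomial loss that $\kappa_m$ absorbs.

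To make the remaining window terms explicit, rescale $y=x/(1+\kappa_mt)$ and split $\log P_tf_m$ into its Gaussian-envelope and periodic parts, as the paper does. The envelope then contributes $\frac12(m-1)!\kappa_m^m$. The quadratic moment contributes $m!\kappa_m^{m-1}$ times a superexponentially small factor. Both are $o(\beta^m)$.
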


Corollary~\ref{cor:high-dimensional} tensors $f_m$ with an explicit broad
Gaussian factor and gives the same asymptotic failure in every dimension.

The conjecture belongs to a hierarchy of Gaussian inequalities arising from
entropy power along the heat flow. Costa proved the concavity of entropy power
under Gaussian convolution \cite{Costa1985,Villani2000}. McKean's Gaussian
optimality conjecture asks whether the signed entropy derivatives are
extremized by Gaussian distributions of the same variance
\cite{McKean1966}, while Toscani proposed a third-order entropy-power inequality
that helped motivate this hierarchy \cite{Toscani2015}.
Wang proved that the entropy power conjecture implies McKean's conjecture
\cite{Wang2024Implication}, and McKean's conjecture in turn
implies~\eqref{eq:gcmc}. Thus
\[
 \begin{gathered}
  \text{entropy-power conjecture}
  \;\Longrightarrow\; \text{McKean's conjecture},\\[-0.2ex]
  \text{McKean's conjecture}
  \;\Longrightarrow\; \text{GCMC}.
 \end{gathered}
\]

Several lower-order cases of~\eqref{eq:gcmc} were known before the recent
counterexamples.  Cheng
and Geng proved~\eqref{eq:gcmc} in one dimension through order four
\cite{ChengGeng2015}.  In arbitrary dimension the first two orders follow
from entropy-power concavity, and the third-order inequality is known in
dimensions at most four \cite{GuoYuanGao2022}.  For one-dimensional
log-concave inputs, Zhang, Anantharam, and Geng developed a
linear-matrix-inequality method for higher entropy derivatives
\cite{ZhangAnantharamGeng2018}.  Wang subsequently proved
\eqref{eq:gcmc} through order five for log-concave inputs in every dimension
by higher-order Otto calculus \cite{Wang2025Otto}.

The de Bruijn identity gives
\[
 -\cH_f'(t)=\frac12 I(P_tf),
 \qquad
 I(g)=\int_{\R^d}|\nabla\log g|^2g\,\dd x,
\]
where the second formula applies to a density $g$ on $\R^d$.
Thus~\eqref{eq:gcmc} is equivalent to complete monotonicity of the Fisher
information profile. Since every positive completely monotone function is
log-convex, GCMC would in particular imply log-convexity of
$t\mapsto I(P_tf)$. This weaker consequence already separates dimension one
from higher dimensions. Ledoux, Nair, and Wang proved Fisher-information
log-convexity in dimension one \cite{LedouxNairWang2021}, whereas a recent
hexagonal construction disproved it in every dimension $d\ge2$
\cite{ZouFanGaoWang2026}. Thus a one-dimensional GCMC counterexample cannot be
detected at the log-convexity level and must instead violate a higher-order
complete-monotonicity sign.

Gu and Sellke recently found such a higher-order obstruction
\cite{GuSellke2026}. Their counterexample is a symmetric probability measure
$\mu$ supported on seventeen points. The entropy profile~\eqref{eq:entropy-profile}
extends to probability measures through the density of $P_t\mu$. In our heat
convention, their certified estimate reads
\[
 \frac{0.36}{2^5}<\cH_\mu^{(5)}(2/3)<\frac{0.37}{2^5}.
\]
This is the smallest possible order of failure for unrestricted
one-dimensional data, by the theorem of Cheng and Geng. Their candidate was
found with GPT-5.5 Pro, and its sign was certified from an exact
one-dimensional integral using SageMath/Arb ball arithmetic. The same paper
also established the existence of a log-concave counterexample by heat
regularization and backward propagation of complete monotonicity.

Theorem~\ref{thm:main} gives a direct analytic construction in the log-concave
class. The $m$th member of the explicit family is paired with the $m$th
derivative, whose signed value is negative for every sufficiently large $m$.
An exact entropy expansion and uniform remainder estimates establish the sign
without a separate numerical certificate.

\noindent\textbf{\textit{Proof strategy.}}\ 
We first remove the Gaussian envelope and study the two-frequency factor on
the circle. Expanding its entropy, we group the surviving monomials by spectral
level. A term at level $j$ carries $B_j^m$, where $B_j=j2^{-j/2}$. The unique
dominant level is $j=3$. The two terms at this level have coefficients $-1/8$
and $1/96$, producing the leading constant $-11/96$
in~\eqref{eq:main-asymptotic}; all other levels form a uniformly bounded
remainder.

We then restore the Gaussian envelope. The exact heat evolution of a
Gaussian-windowed Fourier mode reduces the real-line calculation to the
periodic one, up to Fourier-filtering and time-change errors of
$o(\beta^m)$. The curvature contributed by the Gaussian envelope enforces
strict log-concavity, and analyticity preserves the negative sign for
sufficiently small positive times. Finally, Gaussian tensorization and
additivity of entropy give the higher-dimensional counterexamples.

\section{The Periodic Calculation}
\label{sec:periodic}

For an integrable function on $\T=\R/(2\pi\Z)$, write
\[
 \avgT{F}=\frac1{2\pi}\int_0^{2\pi}F(x)\,\dd x.
\]
The torus heat evolution of the periodic factor in~\eqref{eq:candidate} is
\begin{equation}
 r_m(s,x)=1+a_me^{-s/2}\cos x+b_me^{-2s}\cos2x.
 \label{eq:periodic-profile}
\end{equation}
Set
\begin{equation}
 \cE_m(s)=\avgT{r_m(s,\cdot)\log r_m(s,\cdot)},
 \qquad S_m=(-1)^m\cE_m^{(m)}(0).
 \label{eq:periodic-entropy}
\end{equation}

\begin{lemma}[Periodic asymptotics]
\label{lem:periodic}
As $m\to\infty$,
\begin{equation}
 S_m=-\frac{11}{96}\beta^m+O(1).
 \label{eq:periodic-asymptotic}
\end{equation}
In addition, the absolute sum associated with the expansion of $S_m$ is
$O(\beta^m)$.
\end{lemma}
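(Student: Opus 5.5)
The plan is to expand $r_m\log r_m$ as a power series in the perturbation, differentiate term by term, and sort the resulting monomials by spectral level. Write $r_m=1+u$ with
\[
 u(s,x)=a_me^{-s/2}\cos x+b_me^{-2s}\cos2x .
\]
Since $|u|\le a_m+b_m<1$ uniformly for $s$ in a complex neighborhood of $0$, we may use
\[
 (1+u)\log(1+u)=u+\sum_{n\ge2}\frac{(-1)^n}{n(n-1)}u^n .
\]
Expanding $u^n$ binomially and averaging over $\T$ gives
\[
 \cE_m(s)=\sum_{n\ge2}\frac{(-1)^n}{n(n-1)}\sum_{p+q=n}\binom{n}{p}a_m^pb_m^q\,\avgT{\cos^px\,\cos^q2x}\,e^{-(p/2+2q)s}.
\]
The $u$ term averages to zero. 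The double series converges absolutely and uniformly for complex $s$ near $0$, so it may be differentiated term by term. Because $(-1)^m\partial_s^me^{-\lambda s}|_{s=0}=\lambda^m$, we obtain
\[
 S_m=\sum_{n,p,q}c_{n,p,q}\,a_m^pb_m^q\lambda_{p,q}^m,\qquad \lambda_{p,q}=\tfrac p2+2q .
\]

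Next I would identify the levels. The shift $x\mapsto x+\pi$ sends $\cos x\mapsto-\cos x$ and fixes $\cos2x$, so $\avgT{\cos^px\cos^q2x}=0$ whenever $p$ is odd. Only even $p$ survive, and the level $j=\lambda_{p,q}$ is then an integer. With $a_m=2^{-m/4}$ and $b_m=2^{-m}$ one has $a_m^pb_m^q=2^{-m(p/4+q)}=2^{-mj/2}$, so each surviving term equals its coefficient times $B_j^m$, where $B_j=j2^{-j/2}$.

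Over integers, $B_0=0$, $B_1=2^{-1/2}$, $B_2=B_4=1$, $B_3=\beta$, and $B_j\le 5\cdot2^{-5/2}<0.89$ for $j\ge5$. Hence $j=3$ is the unique dominant level. Its contributions are:
\begin{itemize}
 \item $(p,q,n)=(2,1,3)$, giving $-\frac16\cdot3\cdot\avgT{\cos^2x\cos2x}=-\frac16\cdot3\cdot\frac14=-\frac18$;
 \item $(p,q,n)=(6,0,6)$, giving $\frac1{30}\avgT{\cos^6x}=\frac1{30}\cdot\frac5{16}=\frac1{96}$.
\end{itemize}
These sum to $-\frac{11}{96}$. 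Levels $j\in\{1,2,4\}$ contain finitely many triples $(n,p,q)$, since $n=p+q\le 2j$, and each has $B_j\le1$, so together they contribute $O(1)$.

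The main obstacle is showing that the infinite tail $j\ge5$ is $O(1)$, and indeed $o(1)$, uniformly in $m$. My bound is as follows. At level $j$ there are at most $j+1$ pairs $(p,q)$, each with $n\le 2j$ and coefficient in absolute value at most $\binom{n}{p}\le 4^{j}$. The tail is therefore bounded by
\[
 \sum_{j\ge5}(j+1)4^j\bigl(j2^{-j/2}\bigr)^m .
\]
For $m\ge m_0$ (for example $m_0=10$, taking the elementary ratio check as routine), the summand is dominated by $(j+1)4^j(j2^{-j/2})^{m_0}\,(0.89)^{m-m_0}$, which is summable in $j$ and tends to $0$. This gives $S_m=-\frac{11}{96}\beta^m+O(1)$.

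The same estimate with absolute values inserted everywhere bounds the absolute sum by $\beta^m$ times a constant: level $3$ contributes $\frac{13}{96}\beta^m$, and all other levels contribute $O(1)$. This proves the second assertion of \cref{lem:periodic}.
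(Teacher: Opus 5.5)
Your proposal is correct and follows the paper's proof essentially step for step: the same entropy series, the same parity argument eliminating odd $p$ (your shift $x\mapsto x+\pi$ works as well as the paper's $x\mapsto\pi-x$), the same level bookkeeping with $B_j=j2^{-j/2}$ and the two level-three coefficients $-\tfrac18$ and $\tfrac1{96}$, and the same crude per-level bound $(j+1)4^j$. The differences are cosmetic. You justify termwise differentiation by uniform convergence on a complex neighborhood of $s=0$, whereas the paper uses the explicit bound $\sum_\ell(2\ell)^m(a_m+b_m)^\ell<\infty$. You also bound the whole tail $j\ge5$ at once via $B_j\le5\cdot2^{-5/2}$, which yields $o(1)$; the paper instead treats $j\le15$ as finitely many levels and sums $j\ge16$ against $B_j^{10}$.
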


\begin{proof}
Put
\[
 u_m(s,x)=a_me^{-s/2}\cos x+b_me^{-2s}\cos2x.
\]
For $m\ge5$, one has $\lVert u_m(s,\cdot)\rVert_\infty<1/2$ for $s\ge0$.
The uniformly convergent series
\begin{equation}
 (1+u)\log(1+u)
 =u+\sum_{\ell=2}^{\infty}
 \frac{(-1)^\ell}{\ell(\ell-1)}u^\ell
 \label{eq:entropy-series}
\end{equation}
therefore applies to $r_m=1+u_m$. Since $\avgT{u_m}=0$, expand each power of
$u_m$ and let $p$ and $q$ denote the numbers of first- and second-harmonic
factors, respectively. This gives
\begin{equation}
 S_m=\sum_{\substack{p,q\ge0\\p+q\ge2}}
 \Gamma_{p,q}a_m^pb_m^q
 \left(\frac{p+4q}{2}\right)^m,
 \label{eq:Sm-exact}
\end{equation}
where
\begin{equation}
 \Gamma_{p,q}=
 \frac{(-1)^{p+q}}{(p+q)(p+q-1)}
 \binom{p+q}{p}
 \avgT{\cos^p x\cos^q2x}.
 \label{eq:Gamma}
\end{equation}
Termwise differentiation is legitimate. At total degree
$\ell=p+q$, the differentiated exponential is at most $(2\ell)^m$,
whereas the sum of the remaining absolute weights is bounded by
$(a_m+b_m)^\ell$; hence
\[
 \sum_{\ell\ge2}(2\ell)^m(a_m+b_m)^\ell<\infty.
\]

The change of variables $x\mapsto\pi-x$ shows that the average
in~\eqref{eq:Gamma} vanishes when $p$ is odd. Write $p=2h$ and introduce
the level $j=h+2q$. Then $(p+4q)/2=j$ and
\[
 a_m^{2h}b_m^q=2^{-m(h/2+q)}=2^{-mj/2}.
\]
Thus every nonzero term at level $j$ carries the factor
\begin{equation}
 B_j^m,\qquad B_j=j2^{-j/2}.
 \label{eq:spectral-gain}
\end{equation}
Since
\[
 \frac{B_{j+1}}{B_j}=\frac{j+1}{j\sqrt2},
\]
the sequence increases through $j=3$ and decreases thereafter. In
particular,
\[
 B_1=2^{-1/2},\qquad B_2=B_4=1,
 \qquad B_3=\beta,
\]
and $B_j<1$ for every $j\notin\{2,3,4\}$.

There are two pairs at the unique dominant level $j=3$:
\[
 (p,q)=(2,1),\qquad (p,q)=(6,0).
\]
For the first pair,
\[
 \avgT{\cos^2x\cos2x}=\frac14,
 \qquad
 \Gamma_{2,1}=-\frac1{3\cdot2}\binom32\frac14=-\frac18.
\]
For the second,
\[
 \avgT{\cos^6x}=\frac5{16},
 \qquad
 \Gamma_{6,0}=\frac1{6\cdot5}\frac5{16}=\frac1{96}.
\]
Their sum is
\begin{equation}
 \left(-\frac18+\frac1{96}\right)\beta^m
 =-\frac{11}{96}\beta^m.
 \label{eq:dominant-level}
\end{equation}

It remains to sum the other levels. For fixed $j$, the relation
$h+2q=j$ allows at most $j+1$ pairs, while
\[
 p+q=2j-3q\le2j,
 \qquad
 \binom{p+q}{p}\le4^j.
\]
After discarding the denominator in~\eqref{eq:Gamma} and using that the
trigonometric average has modulus at most one, the total absolute coefficient
at level $j$ is bounded by $D_j=(j+1)4^j$. The finitely many levels
$j\le15$, $j\ne3$, contribute $O(1)$. For $j\ge16$,
\[
 \sum_{j\ge16}D_jB_j^{10}
 =\sum_{j\ge16}(j+1)4^j
 \bigl(j2^{-j/2}\bigr)^{10}<\infty.
\]
Since $B_j<1$ on this range, the same series with exponent $m\ge10$ is
uniformly bounded. This proves~\eqref{eq:periodic-asymptotic}. The identical
bounds without signs, together with the two level-three terms, show that the
absolute sum in~\eqref{eq:Sm-exact} is $O(\beta^m)$.
\end{proof}

\section{Gaussian Localization and Completion of the Proof}
\label{sec:transfer}

We first record positivity and strict log-concavity. Direct Gaussian
integration gives
\begin{equation}
 Z_m=\sqrt{\frac{2\pi}{\kappa_m}}\,\Xi_m,
 \qquad
 \Xi_m=1+a_me^{-1/(2\kappa_m)}+b_me^{-2/\kappa_m}.
 \label{eq:normalization}
\end{equation}
Let $\rho_m(x):=r_m(0,x)=1+a_m\cos x+b_m\cos2x$. For $m\ge5$,
\[
 \rho_m(x)\ge1-a_m-b_m>\frac12,
 \qquad |\rho_m''(x)|\le a_m+4b_m.
\]
Moreover, $8b_m\le a_m$, and consequently
\[
 (\log \rho_m)''
 \le\frac{|\rho_m''|}{\rho_m}
 \le2(a_m+4b_m)\le3a_m.
\]
Since $\kappa_m=8a_m$,
\begin{equation}
 (\log f_m)''=-\kappa_m+(\log \rho_m)''
 \le-5a_m<0.
 \label{eq:log-concavity}
\end{equation}

The interaction of the Gaussian window with the heat flow is exact.

\begin{lemma}[Gaussian-windowed Fourier modes]
\label{lem:heat-mode}
For $\kappa>0$, $n\in\Z$, $t\ge0$, and $d_\kappa(t)=1+\kappa t$,
\begin{equation}
 \begin{aligned}
 P_t\!\left(e^{-\kappa x^2/2}e^{\ii nx}\right)
 &=d_\kappa(t)^{-1/2}
 \exp\!\left(-\frac{\kappa x^2}{2d_\kappa(t)}\right)\\
 &\quad\times
 \exp\!\left(-\frac{n^2t}{2d_\kappa(t)}\right)
 \exp\!\left(\frac{\ii nx}{d_\kappa(t)}\right).
 \end{aligned}
 \label{eq:heat-mode}
\end{equation}
\end{lemma}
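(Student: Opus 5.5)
The plan is to verify \eqref{eq:heat-mode} by direct computation, using either Gaussian integration or the heat equation. Both sides are explicit, so the quickest rigorous route is to compute the convolution $P_tg(x)=(2\pi t)^{-1/2}\int_{\R}e^{-(x-y)^2/(2t)}g(y)\dd y$ with $g(y)=e^{-\kappa y^2/2+\ii ny}$. For $t=0$ the right-hand side reduces to $g$, since $d_\kappa(0)=1$. So I fix $t>0$.

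First collect the quadratic exponent in $y$:
\[
 -\frac{(x-y)^2}{2t}-\frac{\kappa y^2}{2}+\ii ny
 =-\frac{d_\kappa(t)}{2t}y^2+\Bigl(\frac{x}{t}+\ii n\Bigr)y-\frac{x^2}{2t}.
\]
Then apply the complex Gaussian integral
\[
 \int_{\R}e^{-Ay^2+By}\dd y=\sqrt{\pi/A}\,e^{B^2/(4A)},
\]
valid for $A>0$ and all $B\in\mathbb{C}$ by analytic continuation in $B$ or contour shift. Here $A=d_\kappa/(2t)$ and $B=x/t+\ii n$. The prefactor becomes $(2\pi t)^{-1/2}\sqrt{2\pi t/d_\kappa}=d_\kappa^{-1/2}$.

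The remaining exponent is
\[
 \frac{t}{2d_\kappa}\Bigl(\frac{x}{t}+\ii n\Bigr)^2-\frac{x^2}{2t}.
\]
Its real $x^2$ part is
\[
 \frac{x^2}{2t}\Bigl(\frac1{d_\kappa}-1\Bigr)=-\frac{\kappa x^2}{2d_\kappa}.
\]
The cross term is $\ii nx/d_\kappa$, and the constant term is $-n^2t/(2d_\kappa)$. These three pieces are exactly the three exponential factors in \eqref{eq:heat-mode}.

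As a cross-check, or as an alternative proof, one can verify that the right-hand side $w(t,x)$ satisfies $\partial_tw=\tfrac12\partial_x^2w$ with $w(0,\cdot)=g$. One then invokes uniqueness for the heat equation among functions of Gaussian growth. I would use the convolution route as the main proof and this only as a consistency check.

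The only step needing care is justifying the Gaussian integral with complex linear coefficient $B$. I would handle it by noting that both sides are entire in $B$ and agree for real $B$. The rest is algebra, and I expect no real obstacle.
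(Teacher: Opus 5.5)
Your proposal is correct and takes the same route as the paper. Both compute the heat-kernel convolution directly, complete the square in $y$, and read off the prefactor $d_\kappa^{-1/2}$ and the three exponential factors; the paper writes the shifted square $-\frac{d}{2t}\bigl(y-\frac{x+\ii nt}{d}\bigr)^2$ explicitly, while you use the closed form $\int e^{-Ay^2+By}\,\dd y=\sqrt{\pi/A}\,e^{B^2/(4A)}$ and justify the complex $B$ by analyticity, a detail the paper leaves implicit.
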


\begin{proof}
For $t>0$, convolution with the heat kernel gives
\[
 \frac1{\sqrt{2\pi t}}\int_{\R}
 \exp\!\left(-\frac{\kappa y^2}{2}-\frac{(x-y)^2}{2t}+\ii ny\right)\dd y.
\]
Writing $d=d_\kappa(t)$, the exponent equals
\[
 -\frac{d}{2t}\left(y-\frac{x+\ii nt}{d}\right)^2
 -\frac{\kappa x^2}{2d}-\frac{n^2t}{2d}+\frac{\ii nx}{d}.
\]
The Gaussian integral contributes the factor $d^{-1/2}$ and proves
\eqref{eq:heat-mode}. The formula extends to $t=0$ by continuity.
\end{proof}

Specialize the lemma to~\eqref{eq:candidate} and set
\begin{equation}
 d_m(t)=1+\kappa_mt,\qquad
 \tau_m(t)=\frac{t}{d_m(t)},\qquad
 \lambda_m(t)=\kappa_md_m(t).
 \label{eq:time-functions}
\end{equation}
With $r_m(s,y)$ as in~\eqref{eq:periodic-profile},
\begin{equation}
 P_tf_m(x)=Z_m^{-1}d_m(t)^{-1/2}
 e^{-\kappa_mx^2/(2d_m(t))}
 r_m\!\left(\tau_m(t),\frac{x}{d_m(t)}\right).
 \label{eq:heat-factorization}
\end{equation}
For $\lambda>0$, let $\gamma_\lambda$ be the centered Gaussian law of
precision $\lambda$ and write
\[
 \dd\gamma_\lambda(y)=\sqrt{\frac{\lambda}{2\pi}}
 e^{-\lambda y^2/2}\,\dd y,
 \qquad
 \avgG{F}{\lambda}=\int F\,\dd\gamma_\lambda.
\]
After the change of variables $y=x/d_m(t)$,~\eqref{eq:heat-factorization}
one first obtains
\begin{equation}
 (P_tf_m)(x)\,\dd x
 =\Xi_m^{-1}r_m(\tau_m(t),y)\,\dd\gamma_{\lambda_m(t)}(y).
 \label{eq:measure-factorization}
\end{equation}
Taking the logarithm in~\eqref{eq:heat-factorization} and using
\eqref{eq:measure-factorization} gives
\begin{equation}
 \begin{aligned}
 \cH_{f_m}(t)
 &=-\log Z_m-\frac12\log d_m(t)\\
 &\quad+\Xi_m^{-1}\left[
 \avgG{G_m(t,\cdot)}{\lambda_m(t)}+M_m(t)
 \right],
 \end{aligned}
 \label{eq:entropy-decomposition}
\end{equation}
where
\begin{equation}
 \begin{aligned}
 G_m(t,y)&=r_m(\tau_m(t),y)\log r_m(\tau_m(t),y),\\
 M_m(t)&=-\frac{\lambda_m(t)}2
 \avgG{y^2r_m(\tau_m(t),y)}{\lambda_m(t)}.
 \end{aligned}
 \label{eq:G-and-M}
\end{equation}

\begin{lemma}[Gaussian transfer]
\label{lem:transfer}
As $m\to\infty$,
\begin{equation}
 (-1)^m\cH_{f_m}^{(m)}(0)=S_m+o(\beta^m).
 \label{eq:transfer}
\end{equation}
\end{lemma}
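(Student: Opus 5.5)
The plan is to differentiate the decomposition \eqref{eq:entropy-decomposition} term by term. One piece, the torus entropy $\cE_m$ composed with the time change $\tau_m$, will reproduce $S_m$ up to a small relative error. Every other piece will be exponentially small in $2^{m/4}$, and this beats $m!$ and any geometric factor. The constant $-\log Z_m$ drops out. The prefactor satisfies $\Xi_m^{-1}=1+O(a_m)$, so it only perturbs the main term by $O(a_m)\cdot O(\beta^m)=o(\beta^m)$, using the absolute-sum bound of Lemma~\ref{lem:periodic}.

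\emph{Fourier splitting.} Expand $G_m(t,y)=\sum_{k\ge0}c_k(\tau_m(t))\cos(ky)$, where $c_0(s)=\cE_m(s)$. This uses the series \eqref{eq:entropy-series} and the fact that $r_m$ is a trigonometric polynomial. Since $\avgG{\cos ky}{\lambda}=e^{-k^2/(2\lambda)}$, we get
\[
 \avgG{G_m(t,\cdot)}{\lambda_m(t)}
 =\cE_m(\tau_m(t))+R_m(t),
 \quad
 R_m(t)=\sum_{k\ge1}c_k(\tau_m(t))e^{-k^2/(2\lambda_m(t))}.
\]
Similarly, $M_m(t)$ is a finite combination of terms of the form $\lambda\avgG{y^2\cos ky}{\lambda}$. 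These are explicit: the $k=0$ term is constant, and each $k\ge1$ term carries a factor $e^{-k^2/(2\lambda)}$ times a polynomial in $1/\lambda$.

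\emph{Negligible pieces.} I would use Cauchy estimates on a complex disk $|t|\le\rho$, with $\rho>0$ fixed and independent of $m$.
\begin{itemize}
\item On this disk, $\tau_m$ and $\lambda_m$ are analytic. Since $\kappa_m\to0$, we have $\operatorname{Re}(1/\lambda_m(t))\ge 1/(2\kappa_m)$ for large $m$.
\item The complexified $u_m(\tau_m(t),\cdot)$ has sup norm at most $(a_m+b_m)e^{2\rho}<1/2$, so $G_m$ and the $c_k$ are bounded uniformly.
\item Hence $|R_m(t)|$ and the $k\ge1$ part of $M_m$ are $O\bigl(e^{-1/(4\kappa_m)}\bigr)=O\bigl(\exp(-2^{m/4}/32)\bigr)$ on the disk.
\item Cauchy's formula bounds their $m$th derivatives at $0$ by $m!\rho^{-m}\exp(-2^{m/4}/32)$. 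Since $\log m!=O(m\log m)$, this is $o(\beta^m)$.
\item The term $-\tfrac12\log d_m(t)$ has $m$th derivative of size $\tfrac12(m-1)!\kappa_m^m\le m!\,8^m2^{-m^2/4}$, which is also $o(\beta^m)$.
\end{itemize}

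\emph{Time change.} This is the main obstacle. By Lemma~\ref{lem:periodic}'s expansion, $\cE_m(s)=\text{const}+\sum\Gamma_{p,q}a_m^pb_m^q e^{-js}$, where $j=(p+4q)/2$. Termwise differentiation is justified exactly as in that lemma. It therefore suffices to compare $\partial_t^m e^{-j\tau_m(t)}|_{t=0}$ with $(-j)^m$, uniformly in $j\ge1$. Expanding $t^k(1+\kappa t)^{-k}$, one finds
\[
 \partial_t^m e^{-j\tau_m}\big|_0
 =m!\sum_{k=1}^m\frac{(-j)^k}{k!}\binom{m-1}{k-1}(-\kappa_m)^{m-k}.
\]
The $k=m$ term is $(-j)^m$. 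The ratio of the remaining sum to $j^m$ is at most $\sum_{i\ge1}(m^2\kappa_m/j)^i=O(m^2\kappa_m)$, uniformly in $j\ge1$; here $i=m-k$, and I use $\tfrac{m!}{k!}\le m^i$ and $\binom{m-1}{k-1}\le m^i$. Hence
\[
 (-1)^m\partial_t^m\cE_m(\tau_m(t))\big|_0
 =S_m+O\!\left(m^2\kappa_m\sum|\Gamma_{p,q}|a_m^pb_m^qj^m\right).
\]
By the absolute-sum statement of Lemma~\ref{lem:periodic}, the error is $O(m^22^{-m/4}\beta^m)=o(\beta^m)$. Collecting all pieces gives \eqref{eq:transfer}.
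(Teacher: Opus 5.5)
Your proposal is correct and follows essentially the same route as the paper. It uses the same four-term decomposition of \eqref{eq:entropy-decomposition}, Gaussian filtering of the nonzero Fourier modes using $\operatorname{Re}\lambda_m(t)^{-1}\ge 1/(2\kappa_m)$ followed by Cauchy estimates on a fixed disk, the explicit $\tfrac12(m-1)!\kappa_m^m$ bound for the Gaussian entropy term, and coefficient extraction for $e^{-j\tau_m(t)}$, with relative error controlled by $m^2\kappa_m\to0$ and the absolute-sum bound of Lemma~\ref{lem:periodic}. The only deviations are minor. You bound the nonconstant part of $M_m$ by a Cauchy estimate, whereas the paper computes it exactly via $\tau_m+\lambda_m^{-1}=\kappa_m^{-1}$. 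You also use the cruder but sufficient $\Xi_m^{-1}=1+O(a_m)$ and uniformly bounded, rather than strip-decaying, Fourier coefficients.
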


\begin{proof}
We estimate the four nonconstant terms in~\eqref{eq:entropy-decomposition}.
Throughout the proof, $c,C>0$ denote constants independent of $m$, $n$,
and $t$; their values may change from line to line.

\textbf{\textit{Fourier filtering.}}
Fix the strip $|\operatorname{Im}z|\le1$ and the time disk $|t|\le1/2$.
For all sufficiently large $m$, $d_m(t)$ stays away from zero,
$|\tau_m(t)|\le1$, and
\[
 \left|a_me^{-\tau_m(t)/2}\cos z\right|
 +\left|b_me^{-2\tau_m(t)}\cos2z\right|<\frac12.
\]
Thus the principal logarithm defining $G_m(t,z)$ is analytic and uniformly
bounded on this product domain. Its Fourier series
\[
 G_m(t,y)=\sum_{n\in\Z}\widehat G_m(n,t)e^{\ii ny}
\]
converges normally on the real axis, and the strip estimate gives
\begin{equation}
 |\widehat G_m(n,t)|\le Ce^{-|n|}
 \label{eq:Fourier-decay}
\end{equation}
uniformly on the time disk.

For complex $\lambda$ with $\operatorname{Re}\lambda>0$, we use the same
bracket notation for the holomorphic extension of the Gaussian integral,
defined with the principal square root. Direct integration gives
\[
 \sqrt{\frac{\lambda}{2\pi}}
 \int_{\R}e^{\ii ny}e^{-\lambda y^2/2}\,\dd y
 =e^{-n^2/(2\lambda)}.
\]
Normal convergence permits termwise integration. Since
\[
 \operatorname{Re}\frac1{\lambda_m(t)}
 =\frac1{\kappa_m}
 \operatorname{Re}\frac1{1+\kappa_mt}
 \ge\frac{c}{\kappa_m},
\]
the nonzero Fourier modes satisfy, uniformly for $|t|\le1/2$,
\begin{equation}
 \left|
 \avgG{G_m(t,\cdot)}{\lambda_m(t)}
 -\avgT{G_m(t,\cdot)}
 \right|
 \le Ce^{-c/\kappa_m}.
 \label{eq:filtering-bound}
\end{equation}
Cauchy's estimate therefore bounds the $m$th derivative at zero by
$C2^m m!e^{-c/\kappa_m}$. Since
$\kappa_m=8\,2^{-m/4}$, this quantity is $o(\beta^m)$. The zero Fourier
coefficient in~\eqref{eq:filtering-bound} is
\begin{equation}
 \avgT{G_m(t,\cdot)}=\cE_m(\tau_m(t)).
 \label{eq:zero-mode}
\end{equation}

\textbf{\textit{Quadratic moment.}}
The Gaussian characteristic function yields
\[
 \avgG{y^2e^{\ii ny}}{\lambda}
 =\left(\frac1\lambda-\frac{n^2}{\lambda^2}\right)
 e^{-n^2/(2\lambda)}.
\]
The identity
\begin{equation}
 \tau_m(t)+\lambda_m(t)^{-1}=\kappa_m^{-1}
 \label{eq:conservation}
\end{equation}
then gives
\begin{equation}
 \begin{aligned}
 M_m(t)&=-\frac12-\frac{a_m}{2}e^{-1/(2\kappa_m)}
 +\frac{a_m}{2\lambda_m(t)}e^{-1/(2\kappa_m)}\\
 &\quad-\frac{b_m}{2}e^{-2/\kappa_m}
 +\frac{2b_m}{\lambda_m(t)}e^{-2/\kappa_m}.
 \end{aligned}
 \label{eq:moment-explicit}
\end{equation}
Only $\lambda_m(t)^{-1}=\kappa_m^{-1}(1+\kappa_mt)^{-1}$ has a
nonconstant $m$th derivative. Hence
\begin{equation}
 \begin{aligned}
 (-1)^mM_m^{(m)}(0)
 &=m!\kappa_m^{m-1}
 \left[\frac{a_m}{2}e^{-1/(2\kappa_m)}
 +2b_me^{-2/\kappa_m}\right]\\
 &=o(\beta^m).
 \end{aligned}
 \label{eq:moment-small}
\end{equation}

\textbf{\textit{Gaussian entropy.}}
The signed derivative of the second term in~\eqref{eq:entropy-decomposition}
is
\begin{equation}
 (-1)^m\frac{\dd^m}{\dd t^m}
 \left[-\frac12\log(1+\kappa_mt)\right]_{t=0}
 =\frac12(m-1)!\kappa_m^m.
 \label{eq:gaussian-small}
\end{equation}
Its logarithm equals $-(\log2)m^2/4+O(m\log m)$, so it is
$o(\beta^m)$.

\textbf{\textit{Time change.}}
For $L>0$ and $\kappa\ge0$, coefficient extraction from
$e^{-Lt/(1+\kappa t)}$ gives
\begin{equation}
 \begin{aligned}
 T_m(L,\kappa)
 &:=(-1)^m\frac{\dd^m}{\dd t^m}
 e^{-Lt/(1+\kappa t)}\Big|_{t=0}\\
 &=\sum_{r=0}^{m-1}
 \frac{m!}{(m-r)!}\binom{m-1}{r}
 \kappa^rL^{m-r}.
 \end{aligned}
 \label{eq:time-change-formula}
\end{equation}
Indeed, expand first in powers of $Lt/(1+\kappa t)$ and then use
$(1+\kappa t)^{-k}=\sum_{v\ge0}(-1)^v
\binom{k+v-1}{v}\kappa^vt^v$; the coefficient of $t^m$ yields
\eqref{eq:time-change-formula}. The estimates
\[
 \frac{m!}{(m-r)!}\le m^r,
 \qquad
 \binom{m-1}{r}\le\frac{m^r}{r!}
\]
imply
\begin{equation}
 0\le\frac{T_m(L,\kappa)}{L^m}-1
 \le e^{m^2\kappa/L}-1.
 \label{eq:time-change-bound}
\end{equation}
Every exponential in the periodic series~\eqref{eq:Sm-exact} has
$L=(p+4q)/2\ge1$. Since $m^2\kappa_m\to0$, the relative error
in~\eqref{eq:time-change-bound} is $o(1)$ uniformly over all terms. The
absolute-sum estimate in Lemma~\ref{lem:periodic} therefore gives
\begin{equation}
 (-1)^m\frac{\dd^m}{\dd t^m}
 \cE_m(\tau_m(t))\Big|_{t=0}=S_m+o(\beta^m).
 \label{eq:time-change-entropy}
\end{equation}

Finally,~\eqref{eq:normalization} gives
$\Xi_m=1+O(e^{-c/\kappa_m})$. Combining
\eqref{eq:filtering-bound}--\eqref{eq:time-change-entropy} with
\eqref{eq:entropy-decomposition} proves~\eqref{eq:transfer}.
\end{proof}

\begin{proof}[Proof of Theorem~\ref{thm:main}]
The estimates preceding Lemma~\ref{lem:heat-mode} prove smoothness,
positivity, Gaussian decay, and strict log-concavity. Lemmas
\ref{lem:periodic} and~\ref{lem:transfer} give
\[
 (-1)^m\cH_{f_m}^{(m)}(0)
 =-\frac{11}{96}\beta^m+o(\beta^m),
\]
which is negative for all sufficiently large $m$.

For each such fixed $m$, the exact representation
\eqref{eq:entropy-decomposition} extends holomorphically to a neighborhood
of zero. Indeed, the principal logarithm used in the filtering argument is
uniformly analytic there, the spatial Fourier series converges normally, and
$d_m(t)$ stays away from zero. Thus $\cH_{f_m}^{(m)}(t)$ is right-continuous
at zero, and the strict negative sign persists on some interval
$[0,\varepsilon_m)$. This proves~\eqref{eq:positive-time-failure}.
\end{proof}

A related broad-Gaussian tensorization appears in
\cite{ZouFanGaoWang2026}; we record the entropy calculation here.

\begin{corollary}[Higher-dimensional counterexamples]
\label{cor:high-dimensional}
For integers $m\ge5$ and $d\ge2$, put $r=d-1$, let
\[
 g_v(y)=(2\pi v)^{-1/2}e^{-y^2/(2v)},
 \qquad F_{m,d}=f_m\otimes g_{rm}^{\otimes r},
\]
and interpret \eqref{eq:entropy-profile} using the heat semigroup
$e^{t\Delta/2}$ on $\R^d$. Then $F_{m,d}$ is smooth, strictly positive,
Gaussian-decaying, and strictly log-concave. As $m\to\infty$, uniformly over
$d\ge2$,
\begin{equation}
 (-1)^m\cH_{F_{m,d}}^{(m)}(0)
 =-\frac{11}{96}\beta^m+o(\beta^m).
 \label{eq:high-dimensional-asymptotic}
\end{equation}
Consequently, for every sufficiently large $m$ and every $d\ge2$, the GCMC
fails for a smooth strictly log-concave density on $\R^d$.
\end{corollary}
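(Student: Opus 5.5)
The plan is to use that the heat semigroup $e^{t\Delta/2}$ on $\R^d$ tensorizes over coordinates and that entropy is additive on product densities. For $F_{m,d}=f_m\otimes g_{rm}^{\otimes r}$ we have $P_tF_{m,d}=P_tf_m\otimes g_{rm+t}^{\otimes r}$, since the Gaussian factor $g_v$ evolves to $g_{v+t}$. Hence
\[
 \cH_{F_{m,d}}(t)=\cH_{f_m}(t)+r\int_{\R}g_{rm+t}\log g_{rm+t}\,\dd y
 =\cH_{f_m}(t)-\frac r2\log\bigl(2\pi e(rm+t)\bigr).
\]
Regularity properties transfer directly. Smoothness, strict positivity and Gaussian decay are inherited factorwise. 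The Hessian of $\log F_{m,d}$ is block diagonal, with entry $(\log f_m)''\le-5a_m<0$ by \eqref{eq:log-concavity} and entries $-1/(rm)<0$ in the remaining coordinates. So $F_{m,d}$ is strictly log-concave.

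Next I differentiate the Gaussian term $m$ times. For $\phi(t)=-\frac r2\log(rm+t)$,
\[
 (-1)^m\phi^{(m)}(0)=\frac r2(m-1)!\,(rm)^{-m}.
\]
This is positive, so the uniform-in-$d$ bound is the main point. I write it as $\frac{(m-1)!}{2m^m}\,r^{1-m}$. Since $r\ge1$ and $m\ge5$, we have $r^{1-m}\le1$. Stirling gives $(m-1)!/m^m\le e^{-m}\sqrt{2\pi/m}\,(1+o(1))$. The contribution is therefore at most $Ce^{-m}$, uniformly in $d\ge2$. Because $\beta>1>e^{-1}$, this is $o(\beta^m)$. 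The scaling $v=rm$ is what makes this work: the factor $r$ in front is cancelled by $r^{-m}$, and the $m^{-m}$ beats $(m-1)!$.

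Combining this with Theorem~\ref{thm:main}, more precisely with Lemmas~\ref{lem:periodic} and~\ref{lem:transfer}, gives
\[
 (-1)^m\cH_{F_{m,d}}^{(m)}(0)=-\tfrac{11}{96}\beta^m+o(\beta^m)
\]
uniformly in $d$. This is negative for $m\ge m_0$, with $m_0$ independent of $d$.

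For positive times, fix $m$ and $d$. The representation above is analytic near $t=0$: the $f_m$ part is analytic by the argument in the proof of Theorem~\ref{thm:main}, and $\log(rm+t)$ is analytic since $rm>0$. Continuity then propagates the strict negative sign to some interval $[0,\varepsilon_{m,d})$.

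The only delicate step is the uniformity in $d$, and the explicit $r^{1-m}\le1$ bound settles it. The rest is bookkeeping.
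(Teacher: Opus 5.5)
Your proposal is correct and follows the paper's proof essentially step by step. The shared steps are tensorization with $P_tg_v=g_{v+t}$, additivity of entropy, the explicit signed derivative $\tfrac r2(m-1)!\,(rm)^{-m}$ bounded uniformly in $d$, the block-diagonal Hessian, and analyticity near $t=0$ for positive times; the only difference is that the paper uses the cruder bound $(m-1)!\le m^{m-1}$ to get $\le 1/(2m)$ instead of Stirling, which already suffices since $\beta>1$.
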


\begin{proof}
The heat semigroup tensorizes and $P_tg_v=g_{v+t}$. Additivity of Boltzmann
entropy gives
\[
 \cH_{F_{m,d}}(t)
 =\cH_{f_m}(t)-\frac{r}{2}\log\!\bigl(2\pi e(rm+t)\bigr).
\]
Consequently,
\begin{align*}
 (-1)^m\cH_{F_{m,d}}^{(m)}(0)
 &=(-1)^m\cH_{f_m}^{(m)}(0)
 +\frac{r}{2}(m-1)!(rm)^{-m},\\
 0&\le\frac12r^{1-m}\frac{(m-1)!}{m^m}\le\frac{1}{2m}.
\end{align*}
Theorem~\ref{thm:main} proves
\eqref{eq:high-dimensional-asymptotic}; right-continuity transfers its strict
negative sign to a positive-time interval. Finally,
$\nabla^2\log F_{m,d}=\operatorname{diag}((\log f_m)'',-(rm)^{-1}I_r)\prec0$;
the other stated properties also follow directly from the product formula.
\end{proof}

\finalcolumnbalance
\noindent\textbf{Disclosure on the use of generative AI.}
The authors proposed a Gaussian-windowed two-frequency ansatz
$\widetilde Z^{-1}e^{-\lambda x^2/2}
\bigl(c+\alpha\cos x+\eta\cos2x\bigr)$ and the strategy of adapting their
earlier torus-to-Euclidean localization. Through multiple interactions,
\mbox{GPT-5.6 Sol Pro} identified the parameter scaling, and assisted with its
exposition. The authors verified and revised the
argument and take full responsibility for all mathematical claims.
 
% Generated by IEEEtran.bst, version: 1.14 (2015/08/26)

\end{document}